\begin{document}
\title{A Blahut-Arimoto Type Algorithm for Computing Classical-Quantum Channel Capacity} 

% %%% Single author, or several authors with same affiliation:
 \author{
   \IEEEauthorblockN{Haobo Li and Ning Cai}
  \IEEEauthorblockA{ShanghaiTech University\\
                                     199 Huanke Road, Shanghai, China\\
                    Email:\{lihb, ningcai\}@shanghaitech.edu.cn\\ }
 }

%%% Several authors with up to three affiliations:
%\author{%
  %\IEEEauthorblockN{Haobo Lee}
  %\IEEEauthorblockA{ShanghaiTech University\\
    %                                 199 Huanke Road, Shanghai, China\\
      %              Email: lihb@shanghaitech.edu.cn}
  %\and
  %\IEEEauthorblockN{Ning Cai}
  %\IEEEauthorblockA{ShanghaiTech University\\
    %                Hogwarts Castle\\ 
      %              1714 Hogsmeade, Scotland\\
        %%            Email: ningcai@shanghaitech.edu.cn}
%}
 \newtheorem{lemma}{Lemma}
\newtheorem{theorem}{Theorem}
\newtheorem{definition}{Definition}
\newtheorem{corollary}{Corollary}
\newtheorem{assumption}{Assumption}
\newtheorem{remark}{Remark}
\newcommand{\trace}{\mathop{\mathrm{Tr}}}
\maketitle

\begin{abstract}
	Based on Arimoto's work in 1978 \cite{Arimoto}, we propose an iterative algorithm for computing the capacity of a discrete  memoryless classical-quantum channel with a finite input alphabet and a finite dimensional output, which we call the Blahut-Arimoto algorithm for classical-quantum channel, and an input cost constraint is considered. We show that  to reach $\varepsilon$ accuracy, the iteration complexity of the algorithm is up bounded by $\frac{\log n\log\varepsilon}{\varepsilon}$ where $n$ is the size of the input alphabet. In particular, when the output state $\{\rho_x\}_{x\in \mathcal{X}}$ is linearly independent in complex matrix space, the algorithm has a geometric convergence.  We also show that the algorithm reaches an $\varepsilon$ accurate solution with  a complexity of $O(\frac{m^3\log n\log\varepsilon}{\varepsilon})$, and $O(m^3\log\varepsilon\log_{(1-\delta)}\frac{\varepsilon}{D(p^*||p^{N_0})})$ in the special case, where $m$ is the output dimension and $D(p^*||p^{N_0})$ is the relative entropy of two distributions and $\delta$ is a positive number.
\end{abstract}

\section{Introduction}
The classical-quantum channel \cite{Holevo} can be considered as consisting of an input alphabet $\mathcal{X}=\{1,2,\dots,|\mathcal{X}|\}$ and a mapping $x\rightarrow \rho_x$ from the input alphabet to a set of quantum states in a finite dimensional Hilbert space $\mathcal{H}$. The state of a quantum system is given by a density operator $\rho$, which is a positive semi-definite operator with trace equal to one. Let $\mathcal{D}^m$ denote the set of all density operators acting on a Hilbert space $\mathcal{H}$ of dimension $m$.  If the resource emits a letter $x$ with probability $p_x$, the output would be $\rho_x$ with probability $p_x$ and the output would form an ensemble:  $\{p_x:\rho_x\}_{x\in \mathcal{X}}$.

In 1998, Holevo showed \cite{holevo1} that the classical capacity of the classical-quantum channel is the maximization of a quantity called the Holevo information over all input distributions. The Holevo information $\chi$ of an ensemble $\{p_x:\rho_x\}_{x\in \mathcal{X}}$ is defined as
\begin{align}\label{holevo}
	\chi(\{p_x:\rho_x\}_{x\in \mathcal{X}})=H(\sum_xp_x\rho_x)-\sum_xp_xH(\rho_x),
\end{align}
where $H(\cdot)$ is the von Neumann entropy which is defined on positive semidefinite matrices:
\begin{align}\label{von}
	H(\rho)=-\trace (\rho\log \rho).
\end{align}
Due to the concavity of von Neumann entropy \cite{wilde}, the Holevo information is always non-negative. The Holevo quantity is concave in the input distribution \cite{wilde},  so the maximization of \eqref{holevo} over $p$ is a convex optimization problem. However,  it is not a straightforward convex optimization problem. In 2014, Davide Sutter et al. \cite{Sutter} promoted an algorithm based on duality of convex programing and smoothing techniques \cite{nestrov} with a complexity of $O(\frac{(n\vee m)m^3(\log n)^{1/2}}{\varepsilon})$, where $n\vee m=\max\{n,m\}$. 

For discrete memoryless classical channels, the capacity can be computed efficiently by using an algorithm called Blahut-Arimoto (BA) algorithm \cite{Arimoto}\cite{blahut}\cite{yeung}. In 1998, H. Nagaoka \cite{original} proposed a quantum version of BA algorithm. In his work he considered the quantum-quantum channel and this problem was proved to be NP-complete \cite{npc}. And Nagaoka mentioned an algorithm concerning classical-quantum channel, however, its speed of convergence was not studied there and the details of the proof were not presented either. In this paper, we show that with proper manipulations, the BA algorithm can be applied to computing the capacity of classical-quantum channel with an input constraint efficiently. The remainder of this article is structured as: in Section \eqref{BA} we propose the algorithm and show how the algorithm works. In Section \eqref{conv} we provide the convergence analysis of the algorithm.

\textbf{Notations}: The logarithm with basis $2$ is denoted by $\log(\cdot)$. The space of all Hermitian operators of dimension $m$ is denoted by $H^m$. The set of all density matrices of dimension $m$ is denoted by $\mathcal{D}^m\coloneqq\{\rho\in H^m:\rho\geq 0,\trace \rho=1\}$. Each letter $x\in\mathcal{X}$ is mapped to a density matrix $\rho_x$ so the classical-quantum channel can be represented as a set of density matrices $\{\rho_x\}_{x\in\mathcal{X}}$. The set of all probability distributions of length $n$ is denoted by $\Delta_n\coloneqq\{p\in \mathbb{R}^n:p_x\geq0,\sum_{x=1}^np_x=1\}$. The von Neumann entropy of a density matrix $\rho$ is denoted by $H(\rho)=-\trace[\rho\log \rho]$. The relative entropy between $p,q\in\Delta_n$, if $\text{sup}p(p)\subset\text{supp}(q)$,  is denoted by $D(p||q)=\sum_x p_x(\log p_x-
\log q_x)$ and $+\infty$ otherwise. The relative entropy between $\rho,\sigma\in \mathcal{D}^m$, if $\text{supp}(\rho)\subset \text{supp}(\sigma)$, is denoted by $D(\rho||\sigma)=\trace[\rho(\log \rho-\log\sigma)]$ and $+\infty$ otherwise. 

\section{Blahut-Arimoto algorithm for classical-quantum channel}\label{BA}
First we  write down the primal optimization problem:
\begin{align}\label{primal}
\text{Primal}:\begin{cases}
		&\max\limits_{p}H(\sum_x  p_x\rho_x)-\sum_{x}  p_xH(\rho_x),\\
	&\text{s.t.} \  \  s^Tp\leq S;\\   &\qquad\, p\in \Delta_n,
\end{cases}
\end{align}
where $\rho_x\in \mathcal{D}^m$, $0\leq s\in\mathbb{R}^n,S>0$. We denote the maximal value of \eqref{primal} as $C(S)$. In this optimization problem, we want to maximize the Holevo quantity with respect to the input distribution $\{p_x\}_{x\in\mathcal{X}}$.  Practically, the preparation of different signal state $x$ has different cost, which is represented by $s$. And we would like to bound the expected cost of the resource within some quantity, which is represented by the inequality constraint in \eqref{primal}.
\begin{lemma}\cite{Sutter}\label{Sutter}
	Let a set $G$ be defined as $G\coloneqq\arg \max\limits_{p\in \Delta_n}\chi(\{p_x:\rho_x\}_{x\in \mathcal{X}})$ and $S_{max}\coloneqq \min\limits_{p\in G}s^Tp$. Then if $S\geq S_{max}$, the inequality constraint in the primal problem is inactive; and if $S<S_{max}$, the inequality constraint in the primal problem is equivalent to $s^Tp=S$.
\end{lemma}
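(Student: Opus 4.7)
The plan is to handle the two cases of the lemma separately, with the key tool being the concavity of the Holevo quantity in the input distribution, which is already invoked in the paper.

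For the case $S \geq S_{\max}$, I would argue directly. By definition of $S_{\max}$, there exists some $p^\star \in G$ with $s^T p^\star = S_{\max} \leq S$. Since $p^\star \in \Delta_n$, it is feasible for the primal problem, and since $p^\star$ maximizes $\chi$ over all of $\Delta_n$, it also maximizes $\chi$ over the smaller feasible set. Hence the constraint $s^T p \leq S$ is inactive at (this) optimum.

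For the case $S < S_{\max}$, I would argue by contradiction: suppose a constrained optimizer $p^\star$ satisfies $s^T p^\star < S$ strictly. First I would rule out $p^\star \in G$: if it were, then $s^T p^\star \geq \min_{p \in G} s^T p = S_{\max} > S$, contradicting $s^T p^\star < S$. Therefore $\chi(p^\star) < \chi(p_G)$ for any $p_G \in G$. Now consider the interpolation $p_\lambda \coloneqq (1-\lambda) p^\star + \lambda p_G \in \Delta_n$. Since $s^T p$ is continuous (linear) in $\lambda$ and $s^T p^\star < S$, for sufficiently small $\lambda > 0$ we have $s^T p_\lambda < S$, so $p_\lambda$ is feasible. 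By concavity of $\chi$ in $p$,
\begin{equation*}
\chi(p_\lambda) \;\geq\; (1-\lambda)\chi(p^\star) + \lambda\, \chi(p_G) \;>\; \chi(p^\star),
\end{equation*}
contradicting the optimality of $p^\star$. Hence every constrained optimum must satisfy $s^T p^\star = S$, and since conversely $\{p : s^T p = S\} \subset \{p : s^T p \leq S\}$ the two feasible sets give the same maximum value. This establishes the claimed equivalence.

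The argument is essentially bookkeeping once concavity of $\chi$ is in hand; the only mildly delicate point is choosing the interpolation $p_\lambda$ small enough to stay feasible, and observing that the feasibility radius is strictly positive because $s^T p^\star < S$ is strict. No further properties of the quantum states $\rho_x$ beyond concavity of Holevo information in $p$ are needed, so I do not anticipate any serious obstacle.
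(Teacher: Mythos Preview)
Your argument is correct. The paper itself does not prove this lemma; it is stated with a citation to \cite{Sutter} and no proof is given in the text, so there is nothing to compare your approach against directly.

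Your proof is the standard concavity argument and is sound. A couple of minor remarks that do not affect correctness: in Case~1 you implicitly use that the minimum defining $S_{\max}$ is attained, which holds because $G$ is a nonempty closed subset of the compact simplex $\Delta_n$ (as $\chi$ is continuous); and in Case~2 the existence of a constrained optimizer $p^\star$ follows from continuity of $\chi$ on the compact feasible set, which is nonempty under the paper's standing assumption $\min_x s_x \leq S$. With those points noted, the interpolation step and the strict inequality $\chi(p_G) > \chi(p^\star)$ give the contradiction cleanly.
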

Now we assume that $\min \{s_x\}_{x\in\mathcal{X}}\leq S\leq S_{max}$.
The Lagrange dual problem of \eqref{primal} is
\begin{align}\label{dual}
	\text{Dual}: \begin{cases}
		&\min\limits_{\lambda\geq0}\max\limits_{{p}}H(\sum_x  p_x\rho_x)-\sum_{x}  p_xH(\rho_x)\\&\qquad\qquad -\lambda (s^Tp-S)\\
&\text{s.t.}\quad p\in \Delta_n.
	\end{cases}
\end{align}

\begin{lemma}
	Strong duality holds between \eqref{primal} and \eqref{dual}.
\end{lemma}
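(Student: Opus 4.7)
The plan is to establish strong duality by invoking the refined Slater condition for convex programs whose inequality constraints are all affine. First, the primal is a convex optimization problem: the objective $\chi(\{p_x:\rho_x\}_{x\in\mathcal{X}}) = H(\sum_x p_x\rho_x) - \sum_x p_x H(\rho_x)$ is concave in $p$ (as already noted in the introduction), and the feasible region is a polytope cut out of the simplex $\Delta_n$ by the single affine inequality $s^Tp\leq S$. Under the standing assumption $\min_x s_x \leq S$, the feasible set is non-empty because the vertex $e_{x^\star}$ with $x^\star\in\arg\min_x s_x$ satisfies $s^T e_{x^\star}=\min_x s_x\leq S$. The primal optimum $C(S)$ is also finite, since the Holevo quantity is bounded above by $\log m$ on $\mathcal{D}^m$.

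Second, I would apply the refined Slater condition: for a convex maximization problem whose inequality constraints are all affine, strong duality holds as soon as the primal is feasible with finite optimal value; strict feasibility is not required. The conclusion is
\begin{equation*}
\min_{\lambda\geq 0}\ \max_{p\in\Delta_n}\bigl[\chi(\{p_x:\rho_x\})-\lambda(s^Tp-S)\bigr] \;=\; C(S),
\end{equation*}
which is exactly the statement of strong duality between \eqref{primal} and \eqref{dual}.

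An alternative route I would keep in reserve is Sion's minimax theorem. One observes that the dual function $g(\lambda)=\max_p L(p,\lambda)$ is convex in $\lambda$ and, whenever $S>\min_x s_x$, coercive (its asymptotic slope is $S-\min_x s_x>0$), so the outer minimization can be restricted to some compact interval $[0,\Lambda]$ without loss. Sion's theorem then applies to the Lagrangian on the compact convex product $\Delta_n\times[0,\Lambda]$ because it is continuous, concave in $p$, and affine in $\lambda$. The main obstacle in either approach is the degenerate boundary case $S=\min_x s_x$, where no strictly feasible primal point exists and the coercivity estimate above is only weak. The affine-refinement of Slater handles this automatically; in the Sion-based argument one would instead observe that in this case the feasible set collapses to the face supported on $\arg\min_x s_x$, reducing the problem to an unconstrained capacity computation on that face, for which strong duality is trivial.
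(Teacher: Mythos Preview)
Your proposal is correct and follows essentially the same route as the paper: the paper's proof is a one-line appeal to the standard strong duality result in \cite{boyd}, Chapter~5.2.3, which is precisely the refined Slater condition for convex programs with affine constraints that you invoke. Your write-up simply supplies the verification of feasibility and finiteness that the paper leaves implicit, together with an alternative Sion-type argument that the paper does not mention.
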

\begin{proof}
	The lemma follows from standard strong duality result of convex optimization theory (\cite{boyd}, Chapter 5.2.3).
\end{proof}

Define functions:
\begin{align}
	f_{\lambda}( {p}, {p'})&=\sum_{x}\trace \{p_x\rho_x[\log{(p'_x\rho_x)}-\log{(p_x\rho')}]\}-\lambda  s^Tp ,\label{f}\\
	F(\lambda)&=\max_{ {p}}\max_{ {p}'}f( {p}, {p'}).\label{F}
	\end{align}
	where $\rho'=\sum_xp'_x\rho_x$.
	\begin{lemma}\label{l3}
		For fixed $p$, $\arg\max\limits_{p'}f_{\lambda}(p,p')=p$.
	\end{lemma}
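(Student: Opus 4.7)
The plan is to expand $f_\lambda(p,p')$ into standard entropic quantities and then invoke the data-processing inequality (DPI) for the quantum relative entropy. The $-\lambda s^T p$ term is constant in $p'$ and plays no role in the optimization.

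First I would apply the scalar-matrix identity $\log(cA)=(\log c)\,I+\log A$ (valid for $c>0$ on the support of $A$) to both $\log(p'_x\rho_x)$ and $\log(p_x\rho')$ inside the trace in \eqref{f}. Using $\trace\rho_x=1$ and summing over $x$ reduces $f_\lambda$ to
\begin{align*}
f_\lambda(p,p')=-D(p\|p')-\sum_x p_x H(\rho_x)-\trace[\rho\log\rho']-\lambda s^T p,
\end{align*}
where $\rho\coloneqq\sum_x p_x\rho_x$. (If $\mathrm{supp}(p)\not\subseteq\mathrm{supp}(p')$ the first term is $-\infty$, so one may restrict to compatible supports.)

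Next I would evaluate at $p'=p$, which gives $f_\lambda(p,p)=\chi(\{p_x:\rho_x\})-\lambda s^T p$. Subtracting the two expressions, the $-\sum_x p_x H(\rho_x)$ and $-\lambda s^T p$ terms cancel and the remaining pieces recombine into a single quantum relative entropy, yielding the clean identity
\begin{align*}
f_\lambda(p,p)-f_\lambda(p,p')=D(p\|p')-D(\rho\|\rho').
\end{align*}

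To finish I would invoke monotonicity of the quantum relative entropy under the CPTP measure-and-prepare channel $\Phi$ that sends a classical diagonal density matrix with distribution $q$ to $\sum_x q_x\rho_x$. Since $\Phi$ sends the classical states corresponding to $p$ and $p'$ to $\rho$ and $\rho'$ respectively, the Lindblad--Uhlmann inequality yields $D(\rho\|\rho')\le D(p\|p')$. Hence $f_\lambda(p,p)-f_\lambda(p,p')\ge 0$ for every admissible $p'$, so $p$ is a maximizer. The only subtlety I foresee is the log manipulation when some $p_x$ or $p'_x$ vanishes, but this is routine: the standard $+\infty$ convention for relative entropy forces any maximizer to share the support of $p$.
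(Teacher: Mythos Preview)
Your argument is correct and follows essentially the same route as the paper: both reduce the claim to the identity $f_\lambda(p,p)-f_\lambda(p,p')=D(p\|p')-D(\rho\|\rho')$ and finish with the data-processing inequality for the quantum relative entropy. The only cosmetic difference is that the paper first states a slightly more general lemma (also optimizing over the output states $\sigma_x$) and applies DPI via the partial-trace channel on the classical-quantum states $\sum_x p_x|x\rangle\langle x|\otimes\rho_x$ and $\sum_x p'_x|x\rangle\langle x|\otimes\rho_x$, whereas you apply DPI directly via the measure-and-prepare channel $q\mapsto\sum_x q_x\rho_x$; the resulting inequality $D(p\|p')\ge D(\rho\|\rho')$ is the same either way.
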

	\begin{proof}
		Actually, we can prove a stronger lemma: (the following lemma was proposed in \cite{original}, but no proof was given  (perhaps due to the space limitation). We now restate the lemma in \cite{original} and give the proof.)
		\begin{lemma}\label{l4}
		For fixed $\{p_x:\rho_x\}_{x\in\mathcal{X}}$, we have
	\begin{align}
\max_{\{q_x:\sigma_x\}_{x\in\mathcal{X}}}-D(  {p}||  {q})&+\sum_x p_x\trace \{\rho_x[\log{\sigma_x}-\log{\sigma}] \}\nonumber\\=\sum_x p_x\trace &\{\rho_x[\log{\rho_x}-\log{\rho}] \},\label{ss}\\
\text{i.e.}\ \arg\max_{ \{q_x:\sigma_x\}_{x\in\mathcal{X}}}&-D(  {p}||  {q})+\sum_x p_x\trace \{\rho_x[\log{\sigma_x}\nonumber \\& -\log{\sigma}] \}\quad=\{p_x:\rho_x\}_{x\in\mathcal{X}},
\end{align}
where $p,q\in \Delta_n,\sigma_x\in\mathcal{D}^m$ and  $\rho=\sum_x p_x\rho_x, \ \sigma=\sum_x q_x\sigma_x$ .

\end{lemma}

\begin{proof}
Consider \eqref{ss}, we have
\begin{align}
		RHS-LHS =& D(  {p}||  {q})+\sum_x p_x D(\rho_x||\sigma_x)-D(\rho||\sigma)\\
		= &D(\rho_{XB}||\sigma_{XB})-D(\rho||\sigma),
\end{align}
where $\rho_{XB}=\sum_x p_x |x\rangle\langle x|_X\otimes \rho_x$ and $\sigma_{XB}=\sum_x q_x |x\rangle\langle x|_X\otimes \sigma_x$ are classical-quantum state \cite{wilde}. Let the quantum channel $\mathcal{N}$ be the partial trace  channel on $X$ system, then by the monotonicity of quantum relative Entropy (\cite{wilde}, Theorem 11.8.1), we have
\begin{align}
	D(\rho_{XB}||\sigma_{XB})\geq D(\mathcal{N}(\rho_{XB})||\mathcal{N}(\sigma_{XB}))=D(\rho||\sigma).
\end{align}
\end{proof}
Notice that if we let $\sigma_x=\rho_x$ in \eqref{ss}, with some calculation, \eqref{ss} becomes Lemma \ref{l3}. So
Lemma \ref{l3} is a straightforward corollary of lemma \ref{l4}
	\end{proof}
\begin{theorem}\label{th1}
	 The dual problem \eqref{dual} is equivalent to 
\begin{align}\label{dual2}
		\min_{\lambda\geq 0}F(\lambda)+\lambda S.
	\end{align}
\end{theorem}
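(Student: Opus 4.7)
The plan is to reduce the double maximization defining $F(\lambda)$ to a single maximization by applying Lemma \ref{l3}, and then match the resulting expression term-by-term with the inner maximum inside the dual problem \eqref{dual}.

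First I would fix $\lambda \geq 0$ and use Lemma \ref{l3} to write
\begin{align*}
F(\lambda) = \max_{p} \max_{p'} f_\lambda(p,p') = \max_{p} f_\lambda(p,p),
\end{align*}
since the inner maximizer $p' = p$ is achieved for every fixed $p$.

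Next I would evaluate $f_\lambda(p,p)$ explicitly. Since $p_x$ is a positive scalar, $\log(p_x \rho_x) = (\log p_x) I + \log \rho_x$ and similarly $\log(p_x \rho) = (\log p_x) I + \log \rho$, so the $\log p_x$ contributions cancel inside the bracket. Using $\sum_x p_x \rho_x = \rho$ and the definition of von Neumann entropy, a short computation gives
\begin{align*}
f_\lambda(p,p) = H\!\left(\sum_x p_x \rho_x\right) - \sum_x p_x H(\rho_x) - \lambda s^T p,
\end{align*}
which is precisely the Holevo quantity minus the linear cost term.

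Finally, adding $\lambda S$ and taking the minimum over $\lambda \geq 0$ yields
\begin{align*}
\min_{\lambda \geq 0} F(\lambda) + \lambda S = \min_{\lambda \geq 0} \max_{p \in \Delta_n} \Bigl[ H\!\bigl(\textstyle\sum_x p_x\rho_x\bigr) - \sum_x p_x H(\rho_x) - \lambda(s^T p - S) \Bigr],
\end{align*}
which is exactly the dual problem \eqref{dual}. The main technical step, already supplied by Lemma \ref{l3}, is the observation that the auxiliary variable $p'$ may be eliminated without loss; the only mild obstacle is the careful handling of $\log(p_x \rho_x)$ via the scalar/operator splitting to ensure the $\log p_x$ terms cancel cleanly so that $f_\lambda(p,p)$ reproduces the Holevo information.
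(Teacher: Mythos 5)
Your proof is correct and follows essentially the same route as the paper: apply Lemma \ref{l3} to collapse the inner maximization, observe that $f_\lambda(p,p)$ equals the Holevo quantity minus $\lambda s^T p$, and then add $\lambda S$ and minimize over $\lambda$. The only difference is that you spell out the scalar/operator splitting $\log(p_x\rho_x)=(\log p_x)I+\log\rho_x$ explicitly, which the paper leaves implicit.
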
 
\begin{proof} It follows from \eqref{f} and Lemma \ref{l3} that
	\begin{align}
	&\max_{  {p}'}f_{\lambda}(  {p},  {p'})=f_{\lambda}(  {p},  {p})=H(\rho)-\sum_{x}  p_xH(\rho_x)-\lambda s^Tp.
	\end{align}
Hence
	\begin{align}
	&\min_{\lambda\geq0}\max_{  {p}}H(\rho)-\sum_{x}  p_xH(\rho_x)-\lambda (s^Tp-S)\\
	=&\min_{\lambda\geq0}\max_{  {p}}\max_{  {p}'}f_{\lambda}(  {p},  {p}')+\lambda S\\
	=&\min_{\lambda\geq 0}F(\lambda)+\lambda S.
\end{align}

\end{proof}

The BA algorithm is an alternating optimization algorithm, i.e. to optimize $f_{\lambda}(p,p')$, each iteration step would fix one variable and optimize the function over the other variable. Now we use BA algorithm to find $F(\lambda)$. The iteration procedure is
\begin{align}
	&p_x^{0}>0,\\
	&p'^{t}_x=p^{t}_x,\\
	&{p}^{t+1}=\arg\max_{  {p}}\sum_{x}\trace \{p_x\rho_x[\log{(p^{t}_x\rho_x)}\\&\qquad\ -\log{(p_x\rho^{t})}]\}\nonumber -s^Tp,
\end{align}
where $\rho^t=\sum_x p_x^t\rho_x$.

To get ${p}^{t+1}$, we can use the Lagrange function:
\begin{align}
	L=&\sum_{x}\trace \{p_x\rho_x[\log{(p^{t}_x\rho_x)}-\log{(p_x\rho^{t})}]\}-\lambda s^Tp\nonumber\\& -\nu(\sum_xp_x-1),
\end{align}
set the gradient with respect to $p_x$  to zero. By combining the normalization condition we can have (take the natural logarithm for convenience)
\begin{align}
	p^{t+1}_x=&\frac{r_x^t}{\sum_xr_x^t}\ ,\label{pt}\\
	\text{where }\ \ r_x^t=&\exp{(\trace{\{\rho_x[\log{(p^t_x\rho_x)-\log{\rho^t}}]\}}-s_x\lambda)},\label{r}\\ \rho^t=&\sum_x p^t_x\rho_x.
\end{align}

\newpage
 So we can summarize the algorithm below

\begin{algorithm}\label{As1}
\caption{Blahut-Arimoto algorithm for discrete memoryless classical-quantum channel}
\label{alg:A}
\begin{algorithmic}\label{As1}
\STATE {set $p^0_x=\frac{1}{|\mathcal{X}|}$, $x\in \mathcal{X}$;} 
\REPEAT 
\STATE $p'^t_x=p^t_x;$ 
\STATE $p^{t+1}_x=\frac{r_x^t}{\sum_xr_x^t}$, where \\\qquad$r_x^t=\exp{(\trace{\{\rho_x[\log{(p^t_x\rho_x)-\log{\rho^t}}]\}}-s_x\lambda)}$;
\UNTIL{convergence.} 
\end{algorithmic}
\end{algorithm}

\begin{lemma}
	Let $p^*(\lambda)=\arg \max_{p}f(p,p)$ for a given $\lambda$, then $s^Tp^*(\lambda)$ is a  decreasing function of $\lambda$.
\end{lemma}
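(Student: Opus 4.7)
The plan is to use the standard "reverse optimality" trick from convex analysis that establishes monotonicity of a subgradient-type quantity with respect to the Lagrange multiplier. Observe first that by Lemma \ref{l3} we have $f_\lambda(p,p) = \chi(\{p_x:\rho_x\}) - \lambda s^T p$, so that $p^*(\lambda) = \arg\max_p [\chi(p) - \lambda s^T p]$ where I write $\chi(p)$ for the Holevo quantity of the ensemble induced by $p$. Thus $p^*(\lambda)$ is determined by maximizing an affine (in $\lambda$) perturbation of the concave objective $\chi$.

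The key step is then a two-line pairing argument. Pick $\lambda_1 < \lambda_2$ and set $p_1 = p^*(\lambda_1)$, $p_2 = p^*(\lambda_2)$. By the defining optimality of $p_1$ at $\lambda_1$ and of $p_2$ at $\lambda_2$, one has
\begin{align}
\chi(p_1) - \lambda_1 s^T p_1 &\geq \chi(p_2) - \lambda_1 s^T p_2,\\
\chi(p_2) - \lambda_2 s^T p_2 &\geq \chi(p_1) - \lambda_2 s^T p_1.
\end{align}
Adding these two inequalities the $\chi$ terms cancel and one obtains $(\lambda_2 - \lambda_1)(s^T p_1 - s^T p_2) \geq 0$. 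Since $\lambda_2 > \lambda_1$ this yields $s^T p^*(\lambda_1) \geq s^T p^*(\lambda_2)$, which is exactly the claimed monotonicity.

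I do not expect any real obstacle here: the argument is just the standard fact that, for a concave function, the slope of any selection from the $\arg\max$ of its Lagrangian moves monotonically with the multiplier. The only mild subtlety is that $p^*(\lambda)$ need not be unique a priori (the Holevo information is concave but not necessarily strictly concave in $p$), but the pairing inequality above makes no use of uniqueness — it holds for any measurable selection of maximizers, so the statement as written is valid. If one wanted strict monotonicity it would require a strict-concavity assumption such as the linear independence of $\{\rho_x\}$ mentioned elsewhere in the paper, but the lemma only asserts (weak) decrease, so no extra hypothesis is needed.
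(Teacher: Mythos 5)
Your proof is correct and is essentially the same argument as the paper's: both exploit the optimality of $p^*(\lambda_1)$ at $\lambda_1$ and of $p^*(\lambda_2)$ at $\lambda_2$ evaluated against each other's maximizer. The paper phrases it as a proof by contradiction using one inequality plus the assumed strict reversal, while you add the two optimality inequalities directly to get $(\lambda_2-\lambda_1)(s^Tp_1-s^Tp_2)\geq 0$; this is a cosmetic difference only, and your remark that uniqueness of the maximizer is not needed is accurate.
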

\begin{proof} For convenience, we denote $\chi(\{p_x:\rho_x\}_{x\in \mathcal{X}})$ as $\chi(p)$. Notice that $f_{\lambda}(p,p)=\chi(p)-\lambda s^Tp$ by definition of $f(p,p)$.

	For $\lambda_1<\lambda_2$, if $s^T p^*(\lambda_1)< s^Tp^*(\lambda_2)$, then by the definition of $p^*(\lambda)$, we have:
	\begin{align}
		\chi(p^*(\lambda_1))-\lambda_1s^T p^*(\lambda_1)\geq&\chi(p^*(\lambda_2))-\lambda_1s^T p^*(\lambda_2)\\
		\Longrightarrow \chi(p^*(\lambda_2))-\chi(p^*(\lambda_1))\leq&\lambda_1s^T [p^*(\lambda_2)-p^*(\lambda_1)]\\
		<&\lambda_2s^T[ p^*(\lambda_2)-p^*(\lambda_1)]\\
		\Longrightarrow \chi(p^*(\lambda_1))-\lambda_2s^Tp^*(\lambda_1)>&\chi(p^*(\lambda_2))-\lambda_2s^Tp^*(\lambda_2),
	\end{align}
	which is a contradiction to the fact that $p^*(\lambda_2)$ is an optimizer of $\chi(p)-\lambda_2s^T p$. So we must have $s^Tp^*(\lambda_1)geq s^Tp^*(\lambda_2)$ if $\lambda_1<\lambda_2$.
\end{proof}
 We don't need to solve the optimization problem \eqref{dual2},  because from Lemma \ref{Sutter} we can see that the statement ``$p^*$ is an optimal solution" is equivalent to ``$s^Tp^*=S$ and $p^*$ maximizes $f_{\lambda}(p,p)+\lambda S=\chi(\{p_x,\rho_x\}_{x\in \mathcal{X}})-\lambda(s^Tp-S)$", which is also equivalent to ``$s^Tp^*=S$ and $p^*$ maximizes $f_{\lambda}(p,p)$", so
 if for some $\lambda\geq 0$, a $  {p}$ maximizes $f_{\lambda}(  {p},  {p}  )$ and $s^Tp=S$, then the capacity $C(S)=F(\lambda)+\lambda S$, and such $\lambda$ is easy to find since $s^Tp$ is a decreasing function of $\lambda$, and to reach an $\varepsilon$ accuracy, we need 
 \begin{align}\label{28}
 	O(\log \varepsilon)
 \end{align}
 steps using bisection method.

\section{Convergence analysis}\label{conv}
Next we show that the algorithm indeed converges to $F(\lambda)$ and then provide an analysis of the speed of the convergence.
\subsection{The convergence is guaranteed} 

\begin{corollary}\label{cx}
\begin{align}
	f_{\lambda}(p^{t+1},p^t)=\log{(\sum_x r_x^t)}.
\end{align}	
\end{corollary}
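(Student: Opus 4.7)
I would prove this by direct computation, using the key observation that the quantity $\log r_x^t - \log p_x^{t+1}$ is independent of $x$ under the BA update rule.

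First I would expand $f_\lambda(p^{t+1}, p^t)$ from its definition \eqref{f}, noting that $\rho' = \sum_x p_x^t \rho_x = \rho^t$. Since each $p_x^t, p_x^{t+1}$ is a positive scalar commuting with every operator, the matrix logarithms split as $\log(p_x^t \rho_x) = (\log p_x^t)I + \log \rho_x$, and similarly for $\log(p_x^{t+1} \rho^t)$. Combined with $\trace \rho_x = 1$, this turns the expression into
\begin{align*}
f_\lambda(p^{t+1}, p^t) = \sum_x p_x^{t+1}\Big[\log p_x^t + \trace\{\rho_x(\log \rho_x - \log \rho^t)\} - s_x \lambda - \log p_x^{t+1}\Big].
\end{align*}

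Next, comparing with \eqref{r}, the bracketed expression is precisely $\log r_x^t - \log p_x^{t+1}$: the first three terms reproduce $\log r_x^t$. The normalization \eqref{pt} then gives $\log r_x^t - \log p_x^{t+1} = \log\!\big(\sum_x r_x^t\big)$, which does not depend on $x$. Factoring this constant out of the outer sum and using $\sum_x p_x^{t+1} = 1$ finishes the proof.

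I do not anticipate any real obstacle — this is the classical ``log-partition function'' identity that accompanies BA-style alternating maximizations, and the corollary is essentially bookkeeping around the update rule. The only points that need a moment's care are the scalar-times-operator expansion of the matrix logarithm (trivial, since each $p_x$ is a scalar) and keeping $\exp$ and $\log$ in a matched base, under which the computation is invariant.
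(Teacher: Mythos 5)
Your computation is correct and follows essentially the same route as the paper's own proof: expand $f_\lambda(p^{t+1},p^t)$ from \eqref{f}, pull the scalar $\log p_x^{t+1}$ out of the matrix logarithm, recognize $\log r_x^t$ from \eqref{r}, and use the normalization \eqref{pt} to collapse the sum to $\log(\sum_x r_x^t)$. The only cosmetic difference is that you additionally split $\log(p_x^t\rho_x)$ into $\log p_x^t + \log\rho_x$, which the paper leaves intact; nothing hinges on this.
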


\begin{proof}
	\begin{align}
		f_{\lambda}(  
		{p}^{t+1},  
		{p}^t)=&-\sum_x\trace{\{p_x^{t+1}\rho_x\log{p^{t+1}_x}\}}\nonumber\\+ &\sum_x\trace{\{p^{t+1}_x\rho_x[\log(p^t_x\rho_x)-\log(\rho^t)]\}}-\lambda s^Tp^{t+1}\\
		=&-\sum_xp_x^{t+1}\log p_x^{t+1}+\sum_xp_x^{t+1}\log(r_x^t)\\
		=&\sum_xp_x^{t+1}\log(\frac{r_x^t}{p_x^{t+1}})\\
		=& \log(\sum_xr_x^t).
	\end{align}
	The first equality is a manipulation of \eqref{f}. The second equality follows from \eqref{r}. The last equality follows from \eqref{pt}.
\end{proof}

\begin{corollary}\label{c2}
	For arbitrary distribution $\{p_x\}_{x\in\mathcal{X}}$, we have
\begin{align}
\chi(\{p_x,\rho_x\}_{x\in\mathcal{X}})-\lambda s^Tp-	f(  
		{p}^{t+1},  
		{p}^t)\leq \sum_x  p_x\log(\frac{p_x^{t+1}}{p_x^t(x)}).
\end{align}
\end{corollary}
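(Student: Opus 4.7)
The plan is to expand both sides using the definition of $f_\lambda$ together with Corollary \ref{cx}, and to show that the slack in the claimed inequality is exactly the quantum relative entropy $D(\rho\|\rho^t)$, where $\rho=\sum_x p_x\rho_x$.

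First, observe that directly from the definition of $f_\lambda$ (as already checked in the proof of Theorem \ref{th1}),
\[
f_\lambda(p,p) \;=\; H(\rho)-\sum_x p_x H(\rho_x) - \lambda s^T p \;=\; \chi(\{p_x,\rho_x\}_{x\in\mathcal{X}}) - \lambda s^T p,
\]
so the claim is equivalent to $f_\lambda(p,p)-f_\lambda(p^{t+1},p^t)\le \sum_x p_x\log(p_x^{t+1}/p_x^t)$. Then I would invoke Corollary \ref{cx} to rewrite the subtracted term as $f_\lambda(p^{t+1},p^t)=\log\sum_x r_x^t$. This reduces the proof to a single algebraic identity linking $\chi$, the $r_x^t$'s, and the ratio $p_x^{t+1}/p_x^t$.

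Next I would use the update rule \eqref{pt}, which gives $\log p_x^{t+1}=\log r_x^t-\log\sum_y r_y^t$, so
\[
\sum_x p_x\log\frac{p_x^{t+1}}{p_x^t} \;=\; \sum_x p_x\log r_x^t \;-\; \log\!\sum_y r_y^t \;-\; \sum_x p_x\log p_x^t .
\]
Substituting the explicit form of $r_x^t$ from \eqref{r}, the sum $\sum_x p_x\log r_x^t$ splits cleanly into a $\sum_x p_x\log p_x^t$ piece (which cancels the last term above), the entropic pieces $\sum_x p_x\mathrm{Tr}\{\rho_x\log\rho_x\}-\sum_x p_x\mathrm{Tr}\{\rho_x\log\rho^t\}=-\sum_x p_xH(\rho_x)-\mathrm{Tr}(\rho\log\rho^t)$, and the cost term $-\lambda s^T p$. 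Collecting everything and using $\log\sum_y r_y^t=f_\lambda(p^{t+1},p^t)$, the difference LHS$-$RHS of the target inequality reduces to
\[
H(\rho) + \mathrm{Tr}(\rho\log\rho^t) \;=\; -D(\rho\|\rho^t),
\]
which is non-positive by the non-negativity of the quantum relative entropy (Klein's inequality).

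The steps are essentially routine bookkeeping; the only substantive step is recognizing that the residual term collapses to $-D(\rho\|\rho^t)$. Thus the main (mild) obstacle is keeping track of signs and making sure the $\log p_x^t$ terms cancel correctly between $\sum_x p_x\log r_x^t$ and $-\sum_x p_x\log p_x^t$ coming from the ratio $p_x^{t+1}/p_x^t$; once that bookkeeping is done, the conclusion follows immediately from $D(\rho\|\rho^t)\ge 0$.
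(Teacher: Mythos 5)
Your proposal is correct and follows essentially the same route as the paper's own proof: both expand the quantity $\sum_x p_x\log(p_x^{t+1}/p_x^t)$ using the update rule \eqref{pt}, the explicit form of $r_x^t$ in \eqref{r}, and Corollary \ref{cx}, and both identify the exact slack in the inequality as the quantum relative entropy $D(\rho\|\rho^t)\geq 0$. The only difference is cosmetic bookkeeping (you compute LHS minus RHS directly, while the paper rewrites the right-hand side), so no further comment is needed.
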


\begin{proof}
 Define $\rho=\sum_x  p_x\rho_x$, then we have
\begin{align}\label{key}
		&\sum_x  p_x\log(\frac{p_x^{t+1}}{p_x^t})=\sum_x  p_x\log(\frac{1}{p_x^t}\frac{r_x^t}{\sum_{x'}r_{x'}^t})\\
		=&-	f_{\lambda}(  
		{p}^{t+1},  
		{p}^t)+\sum_x  p_x\log\frac{r_x^t}{p_x^t}\\
		=&-	f_{\lambda}(  
		{p}^{t+1},  
		{p}^t)+\sum_x  p_x\trace\{\rho_x[\log(p_x^t\rho_x)-\log\rho^t]\nonumber\\&-s_x\lambda-\log p_x^t\}\\
		=&-	f_{\lambda}(  
		{p}^{t+1},  
		{p}^t)+\sum_x  p_x\trace\{\rho_x[\log\rho_x-\log\rho^t]\}-\lambda s^Tp\label{33}\\
=&-	f_{\lambda}(  
		{p}^{t+1},  
		{p}^t)+\sum_x  p_x\trace\{\rho_x[\log\rho_x-\log\rho+\log\rho\nonumber\\&-\log\rho^t]\}-\lambda s^Tp \\
=&-	f_{\lambda}(  
		{p}^{t+1},  
		{p}^t)+\chi(\{  p_x,\rho_x\}_{x\in\mathcal{X}})-s^Tp+D(\rho||\rho^t).\label{kkk}
\end{align}
The first equality follows from \eqref{pt}. The second equality follows from Corollary  \ref{cx}. The third equality follows from \eqref{r}. The last equality follows from \eqref{holevo}.
Since the relative entropy $D(\rho^X||\rho^t)$ is always non-negative \cite{wilde},  we have
\begin{align}
\chi(\{  p_x,\rho_x\}_{x\in\mathcal{X}})-\lambda s^Tp-f_{\lambda}(p^{t+1},p^{t})\leq \sum_x  p_x\log(\frac{p_x^{t+1}}{p_x^t(x)}).
\end{align}
\end{proof}
\begin{theorem}\label{th2}
	$f_{\lambda}(p^{t+1},p^{t})$ converges to $F(\lambda)$ as $t\rightarrow\infty$.
\end{theorem}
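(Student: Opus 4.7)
The plan is to combine Corollary \ref{c2} (which gives a per-step suboptimality bound) with a telescoping argument and the monotonicity of the sequence $f_{\lambda}(p^{t+1},p^{t})$, in the spirit of Arimoto's original convergence proof.

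First, I would identify $F(\lambda)$ with a clean expression. By Lemma \ref{l3}, $\max_{p'} f_{\lambda}(p,p')=f_{\lambda}(p,p)=\chi(\{p_x:\rho_x\}_{x\in\mathcal X})-\lambda s^T p$, so
\begin{align*}
F(\lambda)=\max_{p}\bigl[\chi(\{p_x:\rho_x\}_{x\in\mathcal X})-\lambda s^T p\bigr],
\end{align*}
attained at some $p^\ast\in\Delta_n$. Substituting $p=p^\ast$ into Corollary \ref{c2} then gives, for every $t\ge 0$,
\begin{align*}
0\le F(\lambda)-f_{\lambda}(p^{t+1},p^{t})\le \sum_x p^\ast_x\log\!\Bigl(\frac{p_x^{t+1}}{p_x^{t}}\Bigr),
\end{align*}
where nonnegativity on the left uses $f_{\lambda}(p^{t+1},p^{t})\le F(\lambda)$.

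Next I would sum this inequality from $t=0$ to $t=T$. The right-hand side telescopes into
\begin{align*}
\sum_{t=0}^{T}\sum_x p^\ast_x\log\!\Bigl(\frac{p_x^{t+1}}{p_x^{t}}\Bigr)=\sum_x p^\ast_x\log\!\Bigl(\frac{p_x^{T+1}}{p_x^{0}}\Bigr)\le \sum_x p^\ast_x\log\!\Bigl(\frac{1}{p_x^{0}}\Bigr)=\log n,
\end{align*}
where I use $p_x^{T+1}\le 1$ and the initialization $p_x^0=1/|\mathcal X|=1/n$ prescribed in Algorithm 1. Hence
\begin{align*}
\sum_{t=0}^{T}\bigl[F(\lambda)-f_{\lambda}(p^{t+1},p^{t})\bigr]\le \log n.
\end{align*}

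To convert this averaged bound into convergence of the sequence itself, I would show that $t\mapsto f_{\lambda}(p^{t+1},p^{t})$ is monotone non-decreasing. This follows in two steps from the alternating-maximization structure: by Lemma \ref{l3}, $f_{\lambda}(p^{t+1},p^{t+1})\ge f_{\lambda}(p^{t+1},p^{t})$, and by the defining update $p^{t+2}=\arg\max_{p}f_{\lambda}(p,p^{t+1})$ we have $f_{\lambda}(p^{t+2},p^{t+1})\ge f_{\lambda}(p^{t+1},p^{t+1})$. Chaining these gives $f_{\lambda}(p^{t+2},p^{t+1})\ge f_{\lambda}(p^{t+1},p^{t})$. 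Combined with the telescoped bound, monotonicity implies
\begin{align*}
F(\lambda)-f_{\lambda}(p^{T+1},p^{T})\le \frac{\log n}{T+1}\xrightarrow[T\to\infty]{}0,
\end{align*}
proving the theorem.

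The only subtle point is recognizing that Corollary \ref{c2}, despite holding for an arbitrary distribution $p$, becomes useful only after plugging in the unknown optimizer $p^\ast$ and observing that the right-hand side telescopes into a relative-entropy-like quantity controlled by the uniform initialization; once that is seen, the argument is essentially bookkeeping. No real obstacle is expected, and the bound $\log n/(T+1)$ produced here already matches the $1/\varepsilon$ iteration complexity promised in the abstract.
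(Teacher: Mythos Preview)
Your proof is correct and follows essentially the same route as the paper: apply Corollary~\ref{c2} at an optimizer $p^\ast$, telescope the right-hand side over $t$, bound the resulting sum by a relative-entropy-type quantity controlled by the uniform initialization, and invoke monotonicity of $f_{\lambda}(p^{t+1},p^{t})$ to conclude. The only cosmetic differences are that the paper writes the telescoped bound as $D(p^\ast\|p^0)-D(p^\ast\|p^{N+1})\le D(p^\ast\|p^0)$ rather than your slightly looser $\log n$, and that you supply the two-step alternating-maximization argument for monotonicity which the paper merely asserts.
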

\begin{proof}
Let $p^*$ be an optimal solution that achieves $F(\lambda)$ then we have the following inequality
\begin{align}
	&\sum_{t=0}^N[F(\lambda)-f_{\lambda}(p^{t+1},p^t)]\label{42}\\ =&\sum_{t=0}^N[\chi(\{p^*_x,\rho_x\}_{x\in\mathcal{X}})-\lambda s^Tp^*-f_{\lambda}(p^{t+1},p^{t})]\\
	\leq &\sum_{t=0}^N\sum_xp^*_x\log(\frac{p_x^{t+1}}{p_x^t})\label{45}\\
	=&\sum_xp^*_x\sum_{t=0}^N\log(\frac{p_x^{t+1}}{p_x^t})\\
	=&\sum_xp^*_x\log(\frac{p_x^{N+1}}{p_x^0})\\
	=&\sum_xp^*_x\log(\frac{  p^*_x}{p_x^0})+\sum_xp^*_x\log(\frac{p_x^{N+1}}{p^*(x)})\\
	=&D(p^*||p^0)-D(p^*||p^{N+1})\\
	\leq & D(p^*||p^0).
\end{align}
The first equality follows from \eqref{f},\eqref{F},\eqref{holevo}. The first inequality follows from Corollary \ref{c2}. The last inequality follows from the non-negativity of relative entropy.

 Let $N\rightarrow \infty$ and with $F(\lambda)-f_{\lambda}(  {p}^{t+1},  {p}^t)\geq 0$, we have
\begin{align}\label{last}
	0\leq \sum_{t=0}^{\infty}[F(\lambda)-f_{\lambda}(  {p}^{t+1},  {p}^t)]\leq D(p^*||  {p}^0),
\end{align}
Notice we  take the initial $p^0$ to be uniform distribution, so the right hand side of \eqref{last} is finite.  With the fact that $f_{\lambda}(  {p^{t+1}},  {p^t})$ is a  non-decreasing sequence, this means $f_{\lambda}(  {p}^{t+1},  {p}^t)$ converges to  $F(\lambda)$.

\end{proof}

\begin{theorem}
	The probability distribution $\{p^t\}_{t=0}^{\infty}$ also converges.
\end{theorem}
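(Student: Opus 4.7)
The plan is to exhibit a Lyapunov functional that is non-increasing along the iteration and whose vanishing forces $p^t$ to converge. The natural candidate, visible in the telescoping inside the proof of Theorem \ref{th2}, is $D(p^\dagger\|p^t)$ where $p^\dagger$ is \emph{any} maximizer of $\chi(p)-\lambda s^T p$ on $\Delta_n$. Indeed, the identity \eqref{kkk} (proved there for an arbitrary distribution $p$) specialized to $p=p^\dagger$, after recognizing the left-hand side as $D(p^\dagger\|p^t)-D(p^\dagger\|p^{t+1})$, becomes
\begin{align*}
D(p^\dagger\|p^t)-D(p^\dagger\|p^{t+1})=F(\lambda)-f_\lambda(p^{t+1},p^t)+D(\rho^\dagger\|\rho^t)\geq 0,
\end{align*}
with $\rho^\dagger=\sum_x p^\dagger_x\rho_x$. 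Hence $D(p^\dagger\|p^t)$ is monotonically non-increasing in $t$ and in particular converges.

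Next I would extract a convergent subsequence $p^{t_k}\to p^{**}$ from the bounded sequence $\{p^t\}\subset\Delta_n$ and show that every such limit point is itself an optimizer. By Lemma \ref{l3} applied to the second slot,
\begin{align*}
f_\lambda(p^{t+1},p^t)\leq f_\lambda(p^{t+1},p^{t+1})=\chi(p^{t+1})-\lambda s^T p^{t+1}\leq F(\lambda),
\end{align*}
and Theorem \ref{th2} gives $f_\lambda(p^{t+1},p^t)\to F(\lambda)$, so a sandwich argument forces $\chi(p^t)-\lambda s^T p^t\to F(\lambda)$. Since $p\mapsto\chi(p)-\lambda s^T p$ is continuous on the simplex, passing to the subsequence delivers $\chi(p^{**})-\lambda s^T p^{**}=F(\lambda)$, so $p^{**}$ is an optimizer.

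Finally, I instantiate the Lyapunov inequality with the specific choice $p^\dagger=p^{**}$. The non-increasing sequence $D(p^{**}\|p^t)$ has a subsequence $D(p^{**}\|p^{t_k})\to D(p^{**}\|p^{**})=0$, using that every iterate lies in the relative interior of $\Delta_n$ (the recursion preserves strict positivity because $r_x^t>0$), so that only the coordinates with $p^{**}_x>0$ contribute and on those the relative entropy is continuous at $p^{**}$. Therefore $D(p^{**}\|p^t)\to 0$, and Pinsker's inequality delivers $\|p^t-p^{**}\|_1\to 0$, i.e.\ the full sequence converges to $p^{**}$. The one real subtlety is circumventing a possible non-uniqueness of the optimizer: this is handled precisely by choosing $p^\dagger$ to be the subsequential limit itself rather than an a priori fixed optimum, so that the Lyapunov function is automatically tuned to pin down the sequence.
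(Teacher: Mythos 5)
Your proposal is correct and follows essentially the same route as the paper: establish that $D(p^{\dagger}\|p^{t})$ is non-increasing via the identity \eqref{kkk}, extract a convergent subsequence whose limit is shown to be an optimizer, and then instantiate the Lyapunov quantity at that very limit point to force $D(p^{**}\|p^{t})\to 0$ and hence convergence of the whole sequence. You merely make explicit two details the paper leaves implicit — the sandwich argument (via Lemma \ref{l3}) showing the subsequential limit attains $F(\lambda)$, and the strict positivity of the iterates justifying $D(p^{**}\|p^{t_k})\to 0$ — so no further changes are needed.
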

\begin{proof}
Remove the summation over $t$ in \eqref{42} \eqref{45} then we have
\begin{align}
	&0\leq F(\lambda)-f_{\lambda}(p^{t+1},p^t)\leq \sum_xp^*_x\log(\frac{p_x^{t+1}}{p_x^t})\\
	=&D(p^*||p^{t})-D(p^*||p^{t+1}).\label{52}
\end{align}

	Now that the sequence $\{p^t\}_{t=0}^{\infty}$ is a bounded sequence, there exists a subsequence $\{p^{t_k}\}_{k=0}^{\infty}$ that converges. Let's say it converges to $\bar{p}$. Then clearly we have $f(\bar{p},\bar{p})=F(\lambda)$ (or $f(p^{t+1},p^{t})$ would not converge). Substitute $p^*=\bar{p}$ in \eqref{52} then we have 
	\begin{align}
		0\leq D(\bar{p}||p^{t})-D(\bar{p}||p^{t+1}).
	\end{align}
So the sequence $\{D(\bar{p}||p^t)\}_{t=0}^{\infty}$ is a decreasing sequence. And there exist a subsequence $\{D(\bar{p}||p^{t_k})\}_{k=0}^{\infty}$ that converges to zero, therefore we can conclude that $\{D(\bar{p}||p^t)\}_{t=0}^{\infty}$ converges to zero, which means $\{p^t\}_{t=0}^{\infty}$ converges to $\bar{p}$.

\end{proof}

\subsection{The speed of convergence}

\begin{theorem}\label{th4}
	To reach $\varepsilon$ accuracy to $F(\lambda)$, the algorithm needs an iteration complexity less than $\frac{\log n}{\varepsilon}$.
\end{theorem}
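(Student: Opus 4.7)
The plan is to combine the telescoping bound from the proof of Theorem \ref{th2} with the fact that $F(\lambda)-f_\lambda(p^{t+1},p^t)$ is monotonically non-increasing in $t$. First, I would recall from the chain of inequalities leading to \eqref{last} the bound
\begin{align}
\sum_{t=0}^{N}[F(\lambda)-f_\lambda(p^{t+1},p^t)] \leq D(p^*\|p^0),
\end{align}
which holds for every $N\geq 0$.

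Next, I would observe that because the initialization is uniform, $p^0_x=1/n$, and therefore
\begin{align}
D(p^*\|p^0) = \sum_x p^*_x\log(np^*_x) = \log n - H(p^*) \leq \log n,
\end{align}
since the Shannon entropy $H(p^*)$ is non-negative.

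The next step is to use monotonicity: in the proof of Theorem \ref{th2} it is noted that $f_\lambda(p^{t+1},p^t)$ is a non-decreasing sequence, so the gap $F(\lambda)-f_\lambda(p^{t+1},p^t)$ is non-increasing. Consequently each term in the sum on the left-hand side is at least as large as the last term, giving
\begin{align}
(N+1)\bigl[F(\lambda)-f_\lambda(p^{N+1},p^N)\bigr] \leq \sum_{t=0}^{N}\bigl[F(\lambda)-f_\lambda(p^{t+1},p^t)\bigr] \leq \log n.
\end{align}
Rearranging yields $F(\lambda)-f_\lambda(p^{N+1},p^N)\leq \frac{\log n}{N+1}$, so demanding the right-hand side to be at most $\varepsilon$ gives an iteration count $N+1\geq \frac{\log n}{\varepsilon}$, i.e., strictly fewer than $\frac{\log n}{\varepsilon}$ steps suffice if we allow equality at the threshold.

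I do not anticipate a real obstacle here, since the heavy lifting has been done in Theorem \ref{th2}. The only subtlety to verify carefully is the monotonicity claim used in the third step, which follows because each iteration maximizes $f_\lambda(\cdot,p^t)$ and then maximizes $f_\lambda(p^{t+1},\cdot)$ (Lemma \ref{l3}), guaranteeing $f_\lambda(p^{t+2},p^{t+1})\geq f_\lambda(p^{t+1},p^{t+1})\geq f_\lambda(p^{t+1},p^t)$.
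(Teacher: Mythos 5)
Your proposal is correct and follows essentially the same route as the paper: the telescoping bound $\sum_{t=0}^{N}[F(\lambda)-f_\lambda(p^{t+1},p^t)]\leq D(p^*\|p^0)\leq\log n$ from Theorem \ref{th2}, combined with monotonicity of the gap, immediately gives the $\frac{\log n}{N+1}$ bound. Your explicit verification of the monotonicity via Lemma \ref{l3} is a welcome addition that the paper leaves implicit.
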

\begin{proof}
	From the proof of  Theorem \ref{th2} we know
	\begin{align}
		&\sum_{t=0}^N[F(\lambda)-f_{\lambda}(p^{t+1},p^t)] \leq D(p^*||p^0)\nonumber\\=&\sum_x p^*_x\log (\frac{p^*_x}{p^0_x})=\log n-H(p^*)<\log n.
	\end{align}
	And $[F(\lambda)-f_{\lambda}(p^{t+1},p^t)]$ is non-increasing in $t$ so
	\begin{align}
		F(\lambda)-f_{\lambda}(p^{t+1},p^t)< \frac{\log n}{t}
	\end{align}
\end{proof}

Next we show that in some special cases the algorithm has a better convergence performance, which is, a geometric speed of convergence.

\begin{assumption}\label{as1}
	The channel matrices $\{\rho_x\}_{x\in \mathcal{X}}$ are linearly independent, i.e. there doesn't exist a  vector $c\in \mathbb{R}^n$ such that
	\begin{align}
		\sum_x c_x\rho_x=0.
	\end{align}
\end{assumption}
\begin{remark}
	Assumption \ref{as1}  is equivalent to:
	
	the output state $\rho=\sum_xp_x\rho_x$ is uniquely determined by the input distribution $p$.
\end{remark}
\begin{theorem}
	Under Assumption \ref{as1}, the optimal solution $p^*$ is unique.
\end{theorem}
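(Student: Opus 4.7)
The plan is to show that the objective function $f_\lambda(p,p) = \chi(\{p_x:\rho_x\}_{x\in\mathcal{X}}) - \lambda s^T p$ is \emph{strictly} concave in $p$ on the feasible set under Assumption \ref{as1}; since a strictly concave function on a convex set has at most one maximizer, uniqueness of $p^*$ follows.

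First I would decompose $f_\lambda(p,p) = H\bigl(\sum_x p_x \rho_x\bigr) - \sum_x p_x H(\rho_x) - \lambda s^T p$, and observe that the second and third terms are linear in $p$ and can be ignored for the purposes of verifying strict concavity. So the task reduces to showing that $g(p) \coloneqq H\bigl(\sum_x p_x \rho_x\bigr)$ is strictly concave in $p$. Next I would invoke the strict concavity of the von Neumann entropy on $\mathcal{D}^m$: for any two distinct density operators $\sigma_1 \neq \sigma_2$ and $\alpha \in (0,1)$, one has $H(\alpha \sigma_1 + (1-\alpha)\sigma_2) > \alpha H(\sigma_1) + (1-\alpha) H(\sigma_2)$ (this follows from the standard strict concavity of $-x\log x$ combined with the spectral behavior under mixing, as recorded in \cite{wilde}).

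Then I would use Assumption \ref{as1} in the form given by the remark: the linear map $\Phi : p \mapsto \sum_x p_x \rho_x$ is injective. Hence for any two distinct feasible distributions $p_1 \neq p_2$, the corresponding output states $\rho_i = \Phi(p_i)$ satisfy $\rho_1 \neq \rho_2$, and for any $\alpha \in (0,1)$,
\begin{align}
g(\alpha p_1 + (1-\alpha) p_2) &= H\bigl(\alpha \rho_1 + (1-\alpha)\rho_2\bigr)\\
&> \alpha H(\rho_1) + (1-\alpha) H(\rho_2)\\
&= \alpha g(p_1) + (1-\alpha) g(p_2),
\end{align}
so $g$ is strictly concave. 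Adding the linear terms $-\sum_x p_x H(\rho_x) - \lambda s^T p$ preserves strict concavity, so $f_\lambda(p,p)$ is strictly concave on $\Delta_n$. The feasible set (intersection of $\Delta_n$ with the affine/half-space constraints from \eqref{primal}) is convex, so the maximizer $p^*$ is unique.

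The main obstacle, really the only non-routine point, is cleanly justifying that the injectivity of $\Phi$ promotes the strict concavity of $H$ on $\mathcal{D}^m$ to strict concavity of $H \circ \Phi$ on $\Delta_n$; this is where Assumption \ref{as1} is used in an essential way, since without linear independence one could have $p_1 \neq p_2$ with $\Phi(p_1) = \Phi(p_2)$, and $g$ would then only be concave, not strictly concave, allowing an entire affine slice of optimizers.
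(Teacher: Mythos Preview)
Your proposal is correct and follows essentially the same approach as the paper: both argue that strict concavity of the von Neumann entropy, combined with the injectivity of $p \mapsto \sum_x p_x \rho_x$ guaranteed by Assumption~\ref{as1}, makes $H(\sum_x p_x \rho_x)$ strictly concave in $p$, and hence the (Holevo quantity minus linear terms) has a unique maximizer. Your version is more explicit about isolating the linear terms and about the feasible set being convex, but the logic is identical.
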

\begin{proof}
	Notice that the von Neumann entropy \eqref{von} is strictly concave \cite{strictentropy}, so for distributions $p\neq p'$, $\rho=\sum_xp_x\rho_x\neq \sum_xp_x'\rho_x=\rho'$, which is followed from Asuumption \eqref{as1}. So this means, $H(\rho)$ is strictly concave in $p$. So Holevo quantity \eqref{holevo} is strictly concave in $p$, which means the optimal solution $p^*$ is unique.
\end{proof}
And we need the following theorem:
\begin{theorem}\label{petz}\cite{petz}
	The relative entropy satisfies
	\begin{align}
		D(\rho||\sigma)\geq \frac{1}{2}\trace(\rho-\sigma)^2.
	\end{align} 
\end{theorem}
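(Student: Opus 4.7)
The plan is to reduce this quantum inequality to the classical Pinsker inequality using the monotonicity of quantum relative entropy under CPTP maps, the same tool already invoked in the proof of Lemma \ref{l4}. Since $\rho-\sigma$ is Hermitian, I would start from its spectral decomposition $\rho-\sigma=\sum_i\lambda_i|v_i\rangle\langle v_i|$ and take $\mathcal{M}(\cdot)=\sum_i|v_i\rangle\langle v_i|(\cdot)|v_i\rangle\langle v_i|$ to be the projective measurement in this eigenbasis. Applying $\mathcal{M}$ to both arguments produces diagonal operators whose diagonal entries are the classical probability vectors $p_i=\langle v_i|\rho|v_i\rangle$ and $q_i=\langle v_i|\sigma|v_i\rangle$, and the choice of basis makes $\mathcal{M}$ fix $\rho-\sigma$, so that $p_i-q_i=\lambda_i$ exactly.

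Monotonicity then yields $D(\rho||\sigma)\geq D(\mathcal{M}(\rho)||\mathcal{M}(\sigma))=D(p||q)$, so the task collapses onto a purely classical bound. I would then apply the classical Pinsker inequality $D(p||q)\geq\tfrac{1}{2}\|p-q\|_1^2$, followed by the elementary estimate $\|p-q\|_1^2=\bigl(\sum_i|\lambda_i|\bigr)^2\geq\sum_i\lambda_i^2=\trace\bigl((\rho-\sigma)^2\bigr)$, where the last equality uses that the eigenvalues of $(\rho-\sigma)^2$ are the squares of those of $\rho-\sigma$. Chaining the three steps delivers the stated inequality.

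The main obstacle is the classical Pinsker step itself. The cleanest route is a further coarse-graining: partition the index set according to the sign of $p_i-q_i$, apply data processing once more to reduce to a two-outcome distribution, and verify the resulting one-variable inequality on $[0,1]$ by a second-derivative argument on $\varphi(x)=x\log x-x+1$. Alternatively, since the statement is already attributed to \cite{petz}, one can simply import it as a black box. A minor bookkeeping subtlety is matching the constant $\tfrac{1}{2}$ under the paper's base-$2$ logarithm convention against the natural-log convention in which Pinsker is typically stated; however, because $1/(2\ln 2)>1/2$, the stated bound is in fact weaker than the sharpest available form and therefore remains valid under either convention.
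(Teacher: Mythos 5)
Your argument is correct, but be aware that the paper itself offers no proof of this statement: it is imported as a cited result from \cite{petz}, so there is no internal proof to compare against. Your route is the standard self-contained derivation. Pinching $\rho$ and $\sigma$ in an eigenbasis of the Hermitian operator $\rho-\sigma$ fixes $\rho-\sigma$, so the resulting diagonal distributions satisfy $p_i-q_i=\lambda_i$; monotonicity of quantum relative entropy (the same tool the paper already invokes in Lemma~\ref{l4}) gives $D(\rho||\sigma)\geq D(p||q)$; classical Pinsker gives $D(p||q)\geq\tfrac{1}{2}\|p-q\|_1^2$; and the chain closes with $\|p-q\|_1^2=\bigl(\sum_i|\lambda_i|\bigr)^2\geq\sum_i\lambda_i^2=\trace(\rho-\sigma)^2$, i.e.\ the trace norm dominating the Hilbert--Schmidt norm. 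Every link is sound, including the degenerate-eigenvalue case (the pinching onto any orthonormal eigenbasis still fixes $\rho-\sigma$), and your remark on the logarithm base is exactly the right check: under the paper's base-$2$ convention Pinsker holds with the larger constant $1/(2\ln 2)$, so the stated bound with $1/2$ is strictly weaker and valid in either convention. What your derivation buys over the paper's bare citation is self-containedness modulo only the classical Pinsker inequality (which, as you note, reduces by a further data-processing step to a one-variable calculus fact on a binary alphabet); what the citation buys is brevity. Either is acceptable here, since the theorem is used downstream only as a black box to lower-bound $D(\rho^*||\rho^t)$ by a quadratic form.
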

Now we state the theorem of convergence:
\begin{theorem}
	Suppose start from some initial point $p^0$, then under Assumption \ref{as1}, the algorithm converges to the optimal point $p^*$. And   $p^0$ converges to $p^*$ at a geometric speed, i.e. there exist $N_0$ and $\delta>0$, where $N_0$ and $\delta$ are independent, such that for any $t>N_0$, we have
	\begin{align}
		 D(p^*||p^{t})\leq (1-\delta)^{t-N_0}D(p^*||p^{N_0}).
	\end{align}
\end{theorem}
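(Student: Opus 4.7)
The plan is to upgrade the monotone inequality \eqref{52} into a geometric contraction $D(p^*||p^{t+1})\le(1-\delta)D(p^*||p^t)$ valid for all $t$ past some threshold. The starting point is the identity that already appears inside the proof of Corollary \ref{c2} (the line labelled \eqref{kkk}), specialized to $p=p^*$:
\[ D(p^*||p^t)-D(p^*||p^{t+1}) \;=\; [F(\lambda)-f_\lambda(p^{t+1},p^t)] + D(\rho^*||\rho^t), \]
with $\rho^*=\sum_x p^*_x\rho_x$. Discarding the non-negative Holevo-gap term and applying Theorem \ref{petz} yields $D(p^*||p^t)-D(p^*||p^{t+1})\ge \tfrac12\trace[(\rho^*-\rho^t)^2]$. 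Expanding the trace gives the quadratic form $(p^*-p^t)^{T}M(p^*-p^t)$ with Gram matrix $M_{xy}=\trace(\rho_x\rho_y)$; Assumption \ref{as1} forces $M\succ 0$, so setting $\alpha:=\lambda_{\min}(M)>0$ produces the clean lower bound $D(\rho^*||\rho^t)\ge\tfrac{\alpha}{2}\,\|p^*-p^t\|_2^{2}$.

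Next I would match this with an upper estimate $D(p^*||p^t)\le C\|p^*-p^t\|_2^{2}$ valid once $p^t$ is close enough to $p^*$. From the previous theorem $p^t\to p^*$, so there exists $N_0$ such that for $t\ge N_0$ we have $p^t_x\ge p^*_x/2$ on $\mathrm{supp}(p^*)$. Combined with the standard reverse Pinsker bound $D(p^*||p^t)\le \chi^{2}(p^*||p^t)=\sum_x(p^*_x-p^t_x)^{2}/p^t_x$, this would give the desired quadratic upper bound with $C$ depending only on $\min_{x\in\mathrm{supp}(p^*)}p^*_x$. Combining with the previous paragraph then produces $D(\rho^*||\rho^t)\ge\delta\,D(p^*||p^t)$ with $\delta:=\alpha/(2C)>0$, so that $D(p^*||p^{t+1})\le(1-\delta)D(p^*||p^t)$ for $t\ge N_0$, and the announced estimate $D(p^*||p^t)\le(1-\delta)^{t-N_0}D(p^*||p^{N_0})$ follows by iteration.

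The main obstacle is the boundary case $\mathrm{supp}(p^*)\subsetneq \mathcal X$: the reverse Pinsker term $(p^*_x-p^t_x)^2/p^t_x$ evaluates to $p^t_x$ when $p^*_x=0$, contributing a piece of size $\sum_{x\notin\mathrm{supp}(p^*)}p^t_x$ to $D(p^*||p^t)$ that is linear, not quadratic, in the perturbation, while the Petz lower bound on $D(\rho^*||\rho^t)$ remains purely quadratic; hence the ratio $D(\rho^*||\rho^t)/D(p^*||p^t)$ is not bounded below in general. To close this gap I would rewrite the BA update in exponentiated-gradient form
\[ p^{t+1}_x \;=\; \frac{p^t_x\exp\bigl(-H(\rho_x)-\trace[\rho_x\log\rho^t]-\lambda s_x\bigr)}{Z^t}, \]
and read off the KKT conditions of $f_\lambda(\cdot,\cdot)$ at $p^*$: for every $x\notin\mathrm{supp}(p^*)$ the bracketed score is strictly smaller than $\log Z^*$ by a uniform gap $\Delta>0$, so once $p^t$ enters a small neighbourhood of $p^*$ the boundary components satisfy $p^{t+1}_x\le e^{-\Delta}p^t_x$. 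This exponential decay of the boundary mass $\sum_{x\notin\mathrm{supp}(p^*)}p^t_x$, merged with the quadratic estimate on $\mathrm{supp}(p^*)$, restores the inequality $D(\rho^*||\rho^t)\ge\delta\,D(p^*||p^t)$ in full generality and completes the proof.
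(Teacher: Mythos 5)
Your main line is the paper's own argument: both proofs start from the identity \eqref{kkk} specialized to $p=p^*$, which gives $D(p^*||p^{t})-D(p^*||p^{t+1})=[F(\lambda)-f_\lambda(p^{t+1},p^t)]+D(\rho^*||\rho^t)\geq D(\rho^*||\rho^t)$, then lower-bound $D(\rho^*||\rho^t)$ by the quadratic form $\frac12 d^TMd$ with the Gram matrix $M_{xy}=\trace(\rho_x\rho_y)$ (Theorem \ref{petz} plus Assumption \ref{as1}), and finish by matching this against a quadratic upper bound on $D(p^*||p^t)$ valid for large $t$. The only difference in this part is how the upper bound is produced: the paper Taylor-expands $D(p^*||p^t)$ in $d=p^*-p^t$ to get \eqref{55}, while you use the reverse Pinsker bound $D(p^*||p^t)\leq \sum_x (p^*_x-p^t_x)^2/p^t_x$ together with $p^t_x\geq p^*_x/2$; these are interchangeable, and yours is arguably cleaner since it avoids the paper's slightly garbled Hessian (the quadratic form there should involve $\mathrm{diag}(1/p^*_x)$, not $\mathrm{diag}(p^*_x)$).

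Where you genuinely depart from the paper is the boundary case $\mathrm{supp}(p^*)\subsetneq\mathcal{X}$, and your concern is legitimate: the paper's expansion implicitly uses $\sum_x d_x=0$ restricted to $\mathrm{supp}(p^*)$, which fails when $p^t$ carries mass off the support, leaving a term linear in $\sum_{x\notin\mathrm{supp}(p^*)}p^t_x$ that the purely quadratic bound $\frac12 d^TMd$ cannot dominate. The paper simply does not address this; in effect its proof (and yours, in the first two paragraphs) establishes the theorem under the additional hypothesis that $p^*$ has full support. Your proposed repair is the natural one, but it contains one unjustified step: the claim that for every $x\notin\mathrm{supp}(p^*)$ the score $-H(\rho_x)-\trace[\rho_x\log\rho^*]-\lambda s_x$ falls below $\log Z^*$ by a \emph{strict} uniform gap $\Delta>0$ is strict complementarity, whereas the KKT conditions only give a non-strict inequality, and neither Assumption \ref{as1} nor the uniqueness of $p^*$ rules out a degenerate letter with $p^*_x=0$ whose score attains equality. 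So your fix needs either an explicit non-degeneracy hypothesis or a different handling of such letters; as written it closes a gap the paper ignores, but not completely.
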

\begin{proof}
	Define $d_x=p^*_x-p_x^t$ and the real vector $d=(d_1,d_2,\dots,d_n)^{T}$. Using Taylor expansion we have
	\begin{align}
	D(p^*||p^t)=&\sum_x p_x^*\log(\frac{p_x^*}{p_x^t})=\sum_x-p_x^*\log(1-\frac{d_x}{p_x^*})\\
	=&\frac{1}{2}d^TPd+\sum_xO(d_x^3),
\end{align}
where $P=diag(p_1^*,p_2^*,\dots,p_n^*)$. Now that $p^t$ converges to $p^*$, i.e. $d$ converges to zero, then there exist a $N_0$ such that for any $t>N_0$, we have 
\begin{align}\label{55}
	D(p^*||p^t)\leq \frac{2}{3}d^TPd.
\end{align}
From Theorem \ref{petz} we have
\begin{align}\label{M}
	D(\rho^*||\rho^t)\geq \frac{1}{2}\trace\{[\sum_xd_x\rho_x]^2\}=\frac{1}{2}d^TMd,
\end{align}
where $M\in \mathbb{R}^{n\times n}$:
\begin{align}
	&M_{ij}=\trace(\rho_i\rho_j).
\end{align}
From \eqref{M} we know that under Assumption \ref{as1} $M$ is positive definite. So there exist a $\delta>0$ such that
\begin{align}
	&\frac{1}{2}M>\delta\frac{2}{3} P\Rightarrow \frac{1}{2}d^TMd> \delta\frac{2}{3}d^TPd.
	\end{align}
So for any $t>N_0$, it follows from \eqref{55},\eqref{M} that
\begin{align}\label{ine}
	D(\rho^*||\rho^t)\geq \delta D(p^*||p^t).
\end{align}
From \eqref{kkk} we know
\begin{align}
	&\sum_xp_x^*\log(\frac{p_x^{t+1}}{p_x^t})\geq D(\rho^*||\rho^t)\\
	\Rightarrow& D(p^*||p^{t+1})\leq D(p^*||p^t)-D(\rho^*||\rho^t),
\end{align}
combined with \eqref{ine} we have
\begin{align}
	 &D(p^*||p^{t+1})\leq D(p^*||p^t)-\delta D(p^*||p^t)=(1-\delta) D(p^*||p^t)\\
	 &\Longrightarrow D(p^*||p^{t})\leq (1-\delta)^{t-N_0}D(p^*||p^{N_0})
\end{align}
for any $t>N_0$.

\end{proof}
\begin{remark}
	(Complexity). A closer look at Algorithm \eqref{As1} reveals that for each iteration, a matrix logarithm $\log\rho^t$ need to be calculated, and the rest are just multiplication of matrices and multiplication of numbers. The matrix logarithm can be done with complexity $O(m^3)$ \cite{ML} so by Theorem \ref{th4} and \eqref{28}, the complexity to reach  $\varepsilon$-close to the true capacity using Algorithm \eqref{As1} is $O(\frac{m^3\log n\log\varepsilon}{\varepsilon})$. And with extra condition of the channel $\{\rho_x\}_{x\in\mathcal{X}}$, which is Assumption \ref{as1},   the complexity to reach an $\varepsilon$-close solution (i.e. $D(p^*||p^t)<\varepsilon$) using Algorithm \eqref{As1} is $O(m^3\log\varepsilon\log_{(1-\delta)}\frac{\varepsilon}{D(p^*||p^{N_0})})$. Usually we do not need $\varepsilon$ to be too small (no smaller than $10^{-6}$), so in either case, the complexity is better than $O(\frac{(n\vee m)m^3(\log n)^{1/2}}{\varepsilon})$ in \cite{Sutter} when $n\vee m$ is big, where $n\vee m=\max\{n,m\}$. 
\end{remark}

\bibliographystyle{ieeetr}
\bibliography{Paper}

\end{document}